\newtheorem{theorem}{Theorem}[section]
\newtheorem*{theorem*}{Theorem}
\newtheorem{lemma}[theorem]{Lemma}
\theoremstyle{remark}
\begin{document}

\title{An Efficient Algorithm for Permutation Iteration Using a Singly Linked List}
\date{January 2025}
\author[$\dagger$]{Thomas Baruchel}
\affil[$\dagger$]{\small Universit\'e de Pau et des Pays de l'Adour, France\authorcr\texttt{thomas.baruchel@univ-pau.fr}}
\maketitle



\abstract{\noindent We present a new content-agnostic algorithm for iterating over all permutations of a sequence. The algorithm leverages elementary~$O(1)$ operations on recursive lists. As a result, no new nodes are allocated during the computation. Instead, all elements are rearranged within the original nodes of the singly linked list throughout the process. While permutations are generated in an unusual order, the transitions between consecutive permutations remain smooth. A proof of concept written in the Lisp programming language is referenced and discussed.}
\\{\bf Keywords:} permutations, singly linked list, in-place algorithm

\maketitle

\section{Introduction}\label{sec1}

The most straightforward method for generating permutations is arguably the recursive approach. For each element $e$ in the set $A$, the permutations of the subset $A\setminus\{e\}$ are computed recursively. The element $e$ is then inserted at the beginning of each resulting sequence. Finally, the union of all computed parts is returned.

Basic implementations of this method generate all permutations in lexicographic order, but several variations can be employed. The well-known review by Robert Sedgewick enumerates these variations~\cite{sedgewick}. One of the most challenging aspects of such algorithms is selecting a new element $e$ at each step, as the recursive calls continuously reorder the list, intermingling used and unused elements.

The most commonly used data structure for the problem is an array. In this case, the new element $e_\textrm{new}$ is typically moved to the beginning of the sequence by swapping it with the previously selected element $e_\textrm{prev}$, which is a constant-time operation on arrays.

Since we focus on singly linked lists, our approach emphasizes extracting a carefully chosen element $e$ and inserting it at the beginning of the list. Furthermore, we constrain the extraction of a new element to specific locations within the list to avoid unnecessary iterations: either at the current working point or at the end of the list. As a result, the data structure is locally treated as a type of input-restricted deque in the proposed algorithm.

A surprisingly regular pattern enables elements to be popped from either the left (the current working point) or the right (the end of the list) while consistently selecting previously unused elements, despite the reordering of the list caused by recursive calls. That regular pattern allows to iterate over all permutations without inspecting the elements or their current positions. The algorithm's behavior is entirely determined by the recursive frame's dimensions, making it uniquely robust to any data type.

While similar algorithms have been used for years, particularly within the Lisp community, they often accumulate multiple in-place modifications during the recursion, resulting in consecutive permutations that may not be closely related. In contrast, the algorithm discussed below consistently transitions between consecutive permutations by picking a single element for inserting it elsewhere. The fact that no accumulation of operations occurs while descending and ascending the recursion stack allows us to implement a local deque with constant-time elementary operations, which is usually not trivial when working with singly linked lists.

Although illustrating the algorithm with a list as short as length 4 may not fully reveal the pattern it follows, it does provide an opportunity to observe the proximity between consecutive permutations:
{\footnotesize
\begin{center}
    \begin{tabular}{l@{\ $\rightarrow$\ }l@{\ $\rightarrow$\ }l@{\ $\rightarrow$\ }l@{\ $\rightarrow$\ }l@{\ $\rightarrow$\ }l}
$(0,1,2,3)$&$(0,1,3,2)$&$(0,3,1,2)$&$(0,3,2,1)$&$(0,2,3,1)$&$(0,2,1,3)$\\
$(2,0,1,3)$&$(2,0,3,1)$&$(2,3,0,1)$&$(2,3,1,0)$&$(2,1,3,0)$&$(2,1,0,3)$\\
$(3,2,1,0)$&$(3,2,0,1)$&$(3,0,2,1)$&$(3,0,1,2)$&$(3,1,0,2)$&$(3,1,2,0)$\\
$(1,3,2,0)$&$(1,3,0,2)$&$(1,0,3,2)$&$(1,0,2,3)$&$(1,2,0,3)$&$(1,2,3,0)$
    \end{tabular}\end{center} }

    The algorithm's skeleton is presented as pseudocode, followed by a proof of its correctness, primarily by demonstrating that each new element inserted at the working point was previously unused. We then analyze the distance between consecutive permutations, measured using a relevant metric, proving it to be near-optimal. Finally, we address implementation details and reference a working Lisp version of the algorithm. Lisp was chosen for its two key features: native support for singly linked lists and the mutability of its data structures.

\section{State of the art}

Given its intrinsic characteristics, our algorithm naturally falls into the category of combinatorial Gray codes, as it proceeds by performing a bounded number of elementary operations on a specific data structure. We recall the definition of a combinatorial Gray code as given by Torsten Mütze:
    \begin{quotation}
        A {\em combinatorial Gray code} is a listing of the combinatorial objects in the class of interest that contains each object exactly once such that any two consecutive objects in the list differ only by a ‘small change’~\cite{muetze}.
    \end{quotation}
As we work here with singly linked lists, the transition between two consecutive permutations arises from shifting a whole part of the list rather than from swapping two elements — this is still considered a combinatorial Gray code by Mütze. Several algorithms relying on `prefix shifts' are indeed mentioned in his review.

In order to take full advantage of the considered data structures and of recursion, `prefix shift' operations can easily be turned into `suffix shift' operations as well; picking the last element and inserting it elsewhere shares some similarities with these `prefix shift' algorithms, but our algorithm cannot be fully categorized as such, since it also involves a pick-and-insert operation in the middle of the list.

    Some algorithms are also devised to tackle Knuth's sigma-tau problem, and ours shares some similarities with them, since it uses two distinct kinds of pick-and-insert operations that closely resemble $\tau$ (swapping the two initial elements) and $\sigma$ (a circular left shift). In~\cite{liptak}, the algorithm relies on singly linked lists for achieving its goal in constant space. The two operations used here can be mapped directly onto their counterparts, except that they are local to the current recursive frame, rather than global as in Knuth's problem. And while we also care about working in-place in-order to leverage the classical properties of singly linked list, our use of recursion prevents any claim of $O(1)$ additional space.

Furthermore, a notable property of our algorithm is that it is fully content-agnostic: deciding which of the two operations is used at each step follows a mechanical pattern rather than a branching structure — all operations are applied blindly.

    By `blindly', we mean that the algorithm follows a fixed pattern when iterating over all permutations, without ever testing the elements themselves. This is made clear by the Lisp code in the appendix~\ref{lisp}: the only conditional branch in the recursive functions detects termination (on an empty list), while the general cases involve no branching at all (the odd/even cases being handled by mutual recursion instead).

    Thus, we do not intend to solve any pre-existing mathematical challenge (such as finding a Hamiltonian path on some transition graph, or solving the Sigma-Tau problem), but rather to provide a simple and robust new recursive algorithm. On the other hand, proving that its mechanical behavior consistently yields a new permutation at each step is, by itself, a notable property. In section~\ref{sec3bis}, we still prove that transitions `mostly' follow the edges of the bubble-sort graph\footnote{We refer here to $BS_n$, the bubble-sort graph of order $n$, as the graph joining permutations of $(1,2,\ldots,n)$ that are reachable from one another by swapping two adjacent elements.} — specifically, for 90\% of transitions.

\section{Description of the algorithm}\label{sec2}

    The following pseudocode relies on two operations performed on a list: extracting an element at a specified position (which simultaneously returns the element and modifies the list) and inserting an element at a specified position. Typographical differences emphasize the procedural nature of the insertion and the functional nature of the extraction, but this distinction holds little practical significance.

  For clarity, the pseudocode employs numeric positions to manipulate the list, highlighting how in-place modifications are performed. However, an implementation using a recursive data structure would instead traverse the list by iterating sequentially over its tail.

    Positions are assumed to be 0-indexed. The recursive procedure takes three arguments: a list, a current position (initialized to $0$ for the main call), and a callback function.
    {\footnotesize\begin{center}\begin{minipage}{.7\linewidth}
\begin{codebox}
    \Procname{$\proc{Permutations}(L,i,\proc{Func})$}
  \li $n \gets \func{length}(L)$
  \li   \If $i\geq n-1$
  \li   \Then
          $\proc{Func}(L)$
  \li   \Else
          $\proc{Permutations}(L, i+1, \proc{Func})$
  \li     $\proc{Insert}(L, i, \func{extract}(L, i+1))$
  \li     $\proc{Permutations}(L, i+1, \proc{Func})$
    \li \Repeat $(n-i-3)$ times
  \li   \If $n-i$ is even
  \li   \Then
      $\proc{Insert}(L, i, \func{extract}(L, n-1))$
  \li   \Else
            $\proc{Insert}(L, i, \func{extract}(L, i+1))$
        \End
  \li     $\proc{Permutations}(L, i+1, \proc{Func})$
        \End
  \li   \If $n-i>2$
  \li   \Then
            $\proc{Insert}(L, i, \func{extract}(L, i+1))$
  \li     $\proc{Permutations}(L, i+1, \proc{Func})$
\end{codebox}\end{minipage}\end{center}
    }

When entering this recursive function, we consider the elements of the list $L$ starting from index $i$, which form a sublist of length $n-i$. From this point onward, let $k$ denote the number of elements in this sublist. As expected, $k$ recursive calls are made in the general case, one for each new element inserted at position $i$:

\begin{itemize}
   \item the first recursive call (at line~4) will be referred to as the `no-op' call, as it allows the termination condition to be reached without modifying the current list, thereby yielding its initial (current) permutation;
   \item $k-1$ additional recursive calls are then made after inserting at position $i$ an element that was popped either from position $i+1$ or from the end of the list, according to the parity of $k$ and certain positional considerations.
\end{itemize}

    We refer to these recursive calls as `left' or `right', depending on the position of the previously popped element (more precisely, each of these terms denotes the two sequential operations: the list modification as well as the following recursive call). The general structure of the algorithm can now be illustrated by the following triangle:
{\footnotesize
\begin{center}
\begin{tabular}{l l l l l l l l l}
    $k=2$&no-op&left\\
    $k=3$&no-op&left&left\\
    $k=4$&no-op&left&right&left\\
    $k=5$&no-op&left&left&left&left\\
    $k=6$&no-op&left&right&right&right&left\\
    $k=7$&no-op&left&left&left&left&left&left\\
    $k=8$&no-op&left&right&right&right&right&right&left\\
    \ldots&\ldots&\ldots&\ldots&\ldots&\ldots&\ldots&\ldots&\ldots
\end{tabular}\end{center} }

    We illustrate that behaviour in the $k=6$ even case. Each line below gives an overview of the recursive call within the initial recursive frame, and only a few intermediate permutations (resulting from the recursive subcalls) are given. The rightmost permutation on each line matches the end of all these subcalls. The $k-1$ corresponding left/right operations are given as annotations:
    {\footnotesize
    \[
        \begin{array}{l@{\;\rightarrow\;}l@{\;\rightarrow\;}l@{\;\rightarrow\;}l@{\;\rightarrow\;}l@{\quad}l}
            012345& 012354& 012534& \cdots&
            021345&\textit{(followed with `left')}\\
            201345& 201354& 201534& \cdots&
210345&\textit{(followed with `right')}\\
            521034& 521043& 521403& \cdots&
512034&\textit{(followed with `right')}\\
            451203& 451230& 451320& \cdots&
415203&\textit{(followed with `right')}\\
            341520& 341502& 341052& \cdots&
314520&\textit{(followed with `left')}\\
            134520& 134502& 134052& \cdots&
143520
        \end{array}\]}

Although the pattern is simple and regular, it is not immediately evident that each popped element is inserted at position $i$ for the first time, as the intermediate recursive calls significantly reorder the elements in the sublist under consideration. The next section of this paper will analyze the pattern and prove the required property.

\section{Proof of correctness}\label{sec3}

The crucial idea here is to prove that each element is inserted at the working position exactly once before all permutations of the remaining elements are generated. That element can be seen as a temporary pivot in the working frame.

To understand how the sublist is modified during a recursive call, consider the last permutation produced while iterating over all permutations of $(0,1,2,\dots,k-1)$. Although the primary purpose of the recursive call is to generate multiple permutations, it can also be viewed externally as a direct mapping of $(0,1,2,\dots,k-1)$ to a specific resulting permutation:

{\centering\small
\begin{tabular}{l l}
    $k=0$&NIL\\
    $k=1$&$(0)$\\
    $k=2$&$(1,0)$\\
    $k=3$&$(1,0,2)$\\
    $k=4$&$(1,2,3,0)$\\
    $k=5$&$(1,0,2,3,4)$\\
    $k=6$&$(1,4,3,5,2,0)$\\
    $k=7$&$(1,0,2,3,4,5,6)$\\
    $k=8$&$(1,4,3,5,6,7,2,0)$\\
    $k=9$&$(1,0,2,3,4,5,6,7,8)$\\
    $k=10$&$(1,4,3,5,6,7,8,9,2,0)$\\
    $k=11$&$(1,0,2,3,4,5,6,7,8,9,10)$\\
    $k=12$&$(1,4,3,5,6,7,8,9,10,11,2,0)$\\
    \ldots&\ldots
\end{tabular}\par
}

\noindent In accordance with the pattern inherent in the algorithm, two distinct cases can be easily identified based on the parity of $k$, which will now be examined in two corresponding lemmas.

\begin{lemma}\label{lemma1}
Let $k$ be an odd number $(k>4)$, and assume that the last permutation of $k$ elements produced by the algorithm turns out to be the initial sequence with the first two elements swapped. When the algorithm is applied to the sequence $(0,1,2,\dots,k)$, each of the $k+1$ elements will be inserted exactly once at the beginning, and the final permutation generated by the algorithm will be $(1,4,3,5,6,7,8,\dots,k,2,0)$.
\end{lemma}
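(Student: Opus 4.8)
The plan is to track the complete state of the list immediately before each of the $k+1$ recursive calls, reading off the element sitting at position $0$ (the element ``inserted at the beginning'' for that call) and, at the very end, the full arrangement. Since $k$ is odd and $k+1$ is even, the top-level invocation ($i=0$, $n=k+1$) performs its operations in the fixed order dictated by the triangle: a no-op call, then a \emph{left}, then exactly $k-2$ \emph{right}s, then a final \emph{left}, with a recursive call after each operation. The whole argument rests on applying the hypothesis to the length-$k$ sublist occupying positions $1,\dots,k$: because $k$ is odd, every recursive call returns this sublist with its first two entries transposed, and because $k>4$ the transposition never disturbs the sublist's last entry.

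I would first compute the opening steps by hand, writing a list as $(x_0 \mid x_1,\dots,x_k)$ to set off the element at position $0$ from the length-$k$ sublist. Starting from $(0 \mid 1,2,\dots,k)$, the no-op call transposes the sublist to give $(0\mid 2,1,3,\dots,k)$, the \emph{left} swaps positions $0,1$ to give $(2\mid 0,1,3,\dots,k)$, the second call yields $(2\mid 1,0,3,\dots,k)$, and the first \emph{right} (moving the last entry $k$ to the front) followed by the third call produces $(k\mid 1,2,0,3,\dots,k-1)$. This suggests the key invariant: after the recursive call following the $t$-th \emph{right} (for $1\le t\le k-2$), the full list equals
\[ \bigl(\,k-t+1 \mid 1,\ k-t+2,\,k-t+3,\dots,k,\ 2,\ 0,\ 3,\,4,\dots,k-t\,\bigr), \]
namely the position-$0$ element, then a $1$, an increasing run of length $t-1$, the block $2,0$, and a trailing increasing run $3,\dots,k-t$. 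I would prove this by induction on $t$: the last entry of this list is $k-t$, so the $(t+1)$-th \emph{right} moves $k-t$ to the front, the ensuing call transposes the new first two sublist entries, and a direct index bookkeeping shows the result is exactly the stated form with $t$ replaced by $t+1$.

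With the invariant in hand, the conclusion follows by inspection. The elements appearing at position $0$ across the $k+1$ calls are $0$ (no-op), $2$ (first \emph{left}), then $k,k-1,\dots,3$ (the successive \emph{right}s, read off from the invariant), and finally $1$ (the last \emph{left}); these are pairwise distinct and together exhaust $\{0,1,\dots,k\}$, proving that each element is inserted at the beginning exactly once. For the final permutation, I would evaluate the invariant at $t=k-2$, giving the pre-final-\emph{left} state $(3\mid 1,4,5,\dots,k,2,0)$; the final \emph{left} swaps positions $0,1$ to give $(1\mid 3,4,5,\dots,k,2,0)$, and the concluding call transposes the first two sublist entries $3,4$ to yield $(1,4,3,5,6,\dots,k,2,0)$, as claimed.

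The hard part will be the inductive invariant, and within it the boundary bookkeeping: I must confirm that the trailing run $3,\dots,k-t$ stays non-empty while \emph{right}s are still being performed (so that the entry moved to the front is genuinely the predicted $k-t$ and the \emph{right}s peel off $k,k-1,\dots,3$ in order), and that this run degenerates to the empty run precisely at $t=k-2$, leaving the clean state $(3\mid 1,4,\dots,k,2,0)$. Getting the three index ranges in the invariant to shift consistently under the ``move-last-to-front, then transpose-first-two'' composition is the only genuinely delicate calculation; everything else is direct substitution.
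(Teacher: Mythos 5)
Your proof is correct and follows essentially the same route as the paper: both trace the full list through the no-op, left, $(k-2)$ \emph{right}s, left sequence, using the hypothesis that each recursive call merely transposes the first two entries of the length-$k$ sublist, and both read off the inserted elements $0,2,k,k-1,\dots,3,1$ and the final permutation $(1,4,3,5,6,\dots,k,2,0)$. The only difference is that you replace the paper's elliptical ``$\dots$'' between the displayed \emph{right} states by an explicit invariant proved by induction on $t$, which is a tightening of the same argument rather than a different one.
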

\begin{proof}
    As $k+1$ is even, the algorithm will make $k+1$ recursive calls following the pattern:
    \begin{center}
        no-op, left, right, right, right, \ldots, right, right, right, left
    \end{center}
    and according to the assumption, each recursive call will swap the second and the third elements, as the first element lies outside the scope of the recursive calls.

    Thus, the initial `no-op' call will leave the sequence of $k+1$ elements as $(0,2,1,3,4,5,6,\dots,k)$, and the first `left' call will transform it into $(2,1,0,3,4,5,6,\dots,k)$. The subsequent $k-2$ `right' calls will proceed as follows:
    \[
        \begin{array}{l}
            (k,1,2,0,3,4,5,\dots,k-1)\\
            (k-1,1,k,2,0,3,4,5,\dots,k-2)\\
            (k-2,1,k-1,k,2,0,3,4,5,\dots,k-3)\\
            \dots\\
            (3,1,4,5,\dots,k,2,0)
        \end{array}
    \]
    Finally the `left' call will produce the expected $(1,4,3,5,6,7,8,\dots,k,2,0)$. Clearly, each of the $k+1$ elements will have been inserted exactly once at the beginning of the sequence.\end{proof}

\begin{lemma}\label{lemma2}
    Let's $k$ be an even number $(k>4)$, and assume that the last permutation of $(0,1,2,\dots,k-1)$ produced by the algorithm turns out to be $(1,4,3,5,6,7,8,\dots,k-1,2,0)$. When the algorithm is applied to the sequence $(0,1,2,\dots,k)$, each of the $k+1$ elements will be inserted exactly once at the beginning, and the final permutation generated will be the initial sequence with the first two elements swapped.
\end{lemma}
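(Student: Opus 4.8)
The plan is to follow the same philosophy as the proof of Lemma~\ref{lemma1}, but to replace the explicit tabulation of intermediate states by a single structural observation, since here the permutation realized by each recursive call is far less transparent than the ``swap the first two entries'' of the odd case. Fix $i=0$ without loss of generality, so the list has $k+1$ entries, position $0$ is the front, and positions $1,\dots,k$ form the sublist handled by the recursive calls. Because the algorithm only ever extracts and inserts by position, a completed recursive call realizes a \emph{fixed} permutation $\sigma$ of the $k$ sublist positions, and the hypothesis tells us exactly what $\sigma$ is: reading $(1,4,3,5,6,\dots,k-1,2,0)$ off index by index gives $\sigma$ as a concrete positional permutation, with a short head, a bulk ``shift by two'' in the middle, and a short tail. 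Since $k+1$ is odd, the schedule of modifications is the all-\emph{left} pattern, so after the no-op call the algorithm performs $k$ identical steps, each consisting of the transposition of positions $0$ and $1$ (the ``left'' modification, which merely moves the entry of position $1$ to the front) followed by another application of $\sigma$.

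First I would package one such left step as a single permutation $\tau$ of the $k+1$ positions, namely $\tau=\sigma\circ(0\,1)$ with the transposition applied first. The key claim, and the heart of the argument, is that $\tau$ is a single $(k+1)$-cycle. Granting this, the insertion property is immediate: if $Y$ denotes the configuration reached after the no-op call, then the $k+1$ successive front entries (one per recursive call, the no-op counted as $j=0$) are $Y[\tau^{-j}(0)]$ for $j=0,1,\dots,k$, and since $\tau$ and hence $\tau^{-1}$ is a single $(k+1)$-cycle, the indices $\tau^{-j}(0)$ run over all positions exactly once. Thus the front entries are pairwise distinct and exhaust $\{0,1,\dots,k\}$, which is precisely the statement that each element is inserted at the beginning exactly once.

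For the final permutation I would use the cycle relation $\tau^{k+1}=\mathrm{id}$, which collapses $\tau^{-k}$ to $\tau$, so that the terminal configuration is obtained from $Y$ by a single application of $\tau$. Combining this with the description of $Y$, which is just $\sigma^{-1}$ read off as a word since the input is the identity arrangement $(0,1,\dots,k)$, the terminal entry at each position $q$ is $\sigma^{-1}(\tau(q))$; evaluating this on the head indices $0,\dots,5$ and on the bulk indices $6,\dots,k$ separately yields $(1,0,2,3,\dots,k)$, the initial sequence with its first two entries swapped, as required.

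The main obstacle is the single-cycle claim for $\tau$. Establishing it amounts to starting at position $0$ and following $\tau$: the orbit should descend through the even indices $k,k-2,\dots,4$ via the bulk shift, cross over through the head of $\sigma$ to the top odd index $k-1$, descend through the odd indices $k-1,k-3,\dots,5$, and finally close up through $2,1$ back to $0$. The care required is exactly at the few places where $\sigma$ departs from the uniform ``shift by two'' behavior, namely its head $1,4,3,5$ and its tail $\dots,2,0$, because those are the junctions where the two parity runs of the cycle are stitched together; a clean treatment would isolate these boundary indices and verify the hand-offs explicitly while treating the two arithmetic-progression runs uniformly. I would also dispose separately of the smallest case $k=6$, where the odd run degenerates to the single index $5$, to confirm that the generic description does not break down at the lower end.
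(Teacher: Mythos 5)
Your proposal is correct and follows essentially the same route as the paper: the paper likewise compresses the swap-plus-recursive-call into a single positional permutation $T$ of the $k+1$ indices, establishes that $T$ is a single $(k+1)$-cycle by tracing exactly the orbit you describe (even indices descending to $i_4$, then $i_3$, over to $i_{k-1}$, odd indices descending to $i_5$, then $i_2,i_1,i_0$ --- presented there as a directed graph forming one strongly connected component), and obtains both the once-through-the-front property and the final permutation $(1,0,2,3,\dots,k)$ by the same observation that $k$ forward iterations amount to a single backward iteration. The only differences are presentational: your group-theoretic phrasing $\tau=\sigma\circ(0\,1)$, $\tau^{k+1}=\mathrm{id}$, $\tau^{-k}=\tau$ versus the paper's graph-theoretic picture.
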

\begin{proof}
    As $k+1$ is odd, the algorithm will make $k+1$ recursive calls following the pattern:
    \begin{center}
        no-op, left, left, left, left, left, left, left, left\ldots
    \end{center}

    By carefully relabeling the elements, we find that the initial `no-op' call leaves the sequence of $k+1$ elements as $(0,2,5,4,6,7,8,9,\dots,k,3,1)$.

    Next, $k$ `left' recursive calls are made. However, tracking the position of each element throughout the process is not as straightforward as in the proof of Lemma~\ref{lemma1}. An simpler approach can be achieved by leveraging graph theory.

    We begin by compressing the `left` recursive call into a single permutation $T$ of the $k+1$ elements:
    \begin{itemize}
        \item the two first elements are swapped initially;
        \item the permutation of $k$ elements from the assumption is then applied to the sublist starting at the second element.
    \end{itemize}
    
    Applying $T$ to $(0,1,2,\dots,k)$ results in $(1,2,5,4,6,7,8,\dots,k,3,0)$. We now represent the list as an array and describe $T$ as a directed graph based on the following concept: $k+1$ vertices correspond to each index in the array, and $k+1$ edges illustrate how each element moves from one position to another position when $T$ is applied. The resulting graph appears as follows:

    \begin{center}
        \begin{tikzpicture}[scale=0.95]
  \usetikzlibrary {graphs}
  \usetikzlibrary{arrows.meta}
  \node (i0) at (0,0) {$i_0$};
  \node (i1) at (.75,0) {$i_1$};
  \node (i2) at (1.5,0) {$i_2$};
  \node (i3) at (2.25,0) {$i_3$};
  \node (i4) at (3,0) {$i_4$};
  \node (i5) at (3.75,0) {$i_5$};
  \node (i6) at (4.5,0) {$i_6$};
  \node (i7) at (5.25,0) {$i_7$};
  \node (i8) at (6,0) {$i_8$};
  \node (i9) at (7,0) {$\dots$};
  \node (ia) at (8,0) {$i_{k-2}$};
  \node (ib) at (9,0) {$i_{k-1}$};
  \node (ic) at (10,0) {$i_{k}$};

        \draw[->] (i0) to[out=325, in=215] (ic);
        \draw[->] (i3) to[out=325, in=215] (ib.south);
        \draw[->] (ic) to[out=155, in=25] (ia.north);
        \draw[->] (ia) to[out=135, in=45] (i9);
        \draw[->] (ib) to[out=225, in=315] (i9);
        \draw[->] (i6) to[out=135, in=45] (i4);
        \draw[->] (i8) to[out=135, in=45] (i6);
        \draw[->] (i7) to[out=225, in=315] (i5);
        \draw[->] (i2) to[out=105, in=75] (i1);
        \draw[->] (i1) to[out=105, in=75] (i0);
        \draw[->] (i5) to[out=225, in=315] (i2);
        \draw[->] (i4) to[out=105, in=75] (i3);
        \draw[->] (i9) to[out=135, in=45] (i8);
        \draw[->] (i9) to[out=225, in=315] (i7);
    \end{tikzpicture}
    \end{center}
    \vspace{-20pt}
    
    Consider an element $e$ located at position $i_k$; the element $e$ will jump to $i_{k-2}$, $i_{k-4}$, \ldots, iterating over all even indices downward until it reaches $i_4$, and then $i_3$. Next, $e$ will move to $i_{k-1}$ and iterate over all odd indices downward, eventually reaching $i_5$. The final part of the path traverses $i_2$, $i_1$, and $i_0$. Observing the structure of the graph, we see that it forms a single strongly connected component, which ultimately reduces to a single directed cycle.

    It is now evident that, during the $k$ iterations, all elements will pass through $i_0$. One additional iteration would return the element initially at $i_0$ back to its starting position.

To compute the last permutation of the $k+1$ elements, instead of iterating $k$ times with $T$, it is more straightforward to perform a single {\em backward} iteration. Starting from $(0,2,5,4,6,7,8,9,\dots,k,3,1)$, as mentioned above, this leads to $(1, 0, 2, 3, 4, 5, \dots, k)$.

Thus, each element will be inserted at the beginning before a recursive call, and the list will return to its initial state, except for the first two elements being swapped.
\end{proof}

\begin{theorem*}
    The algorithm described in section~\ref{sec2} is correct.
\end{theorem*}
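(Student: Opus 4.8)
The plan is to read ``correct'' as the statement that, for a list $L$ of length $n$, the top-level call $\proc{Permutations}(L,0,\proc{Func})$ invokes $\proc{Func}$ exactly $n!$ times, once on each of the $n!$ arrangements of the elements of $L$. I would prove a slightly stronger statement by induction on the sublist length $k=n-i$, so that the two lemmas can be plugged in directly as the inductive step. Concretely, the induction hypothesis would assert, for a call entered with a suffix of length $k$: (A) the call invokes $\proc{Func}$ exactly $k!$ times, realizing each permutation of the $k$ suffix elements exactly once while leaving positions $0,\dots,i-1$ untouched; and (B) upon return the suffix is left in the explicit ``last permutation'' state of the table in Section~\ref{sec3}, namely the first two elements swapped when $k$ is odd and the form $(1,4,3,5,\dots,k-1,2,0)$ when $k$ is even, with the small values read off the table. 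Part~(B) is precisely what Lemmas~\ref{lemma1} and~\ref{lemma2} take as a hypothesis and re-establish one step higher, so the two lemmas are exactly the odd$\to$even and even$\to$odd halves of a single mutual induction.

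For the inductive step at size $k>4$ I would argue as follows. The decisive fact is that the $k$ elements successively moved to position $i$ across the $k$ recursive calls are pairwise distinct; this is exactly the ``each element inserted exactly once'' conclusion of Lemma~\ref{lemma1} (for even $k$, the lemma's parameter being the odd value $k-1$) or of Lemma~\ref{lemma2} (for odd $k$, the lemma's parameter being the even value $k-1$), whose hypotheses are supplied by part~(B) of the induction hypothesis at size $k-1$. Granting this, the $k$ recursive calls partition the $k!$ target permutations into $k$ classes according to which element occupies position $i$; within each class the leading element is frozen and, by part~(A) of the induction hypothesis applied to the length-$(k-1)$ suffix, the call enumerates all $(k-1)!$ arrangements of the remaining elements exactly once. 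Since the $k$ frozen elements are all distinct, the classes are disjoint and exhaust all $k!$ permutations, giving part~(A) at size $k$; part~(B) at size $k$ is the remaining conclusion of the same lemma.

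The base cases require separate, direct verification, since both lemmas demand $k>4$. I would check the termination regime (a suffix of length $0$ or $1$, where $\proc{Func}$ fires once and (A), (B) hold trivially) together with $k=2,3,4,5$ by explicit enumeration, confirming for each both the exactly-once property and the last-permutation entry of the table. These computations seed the two parity chains: $k=5$ (first two swapped) launches Lemma~\ref{lemma1} to reach $k=6$, Lemma~\ref{lemma2} then reaches $k=7$, and the alternation continues indefinitely, so (A) and (B) hold for every $k$. Taking $i=0$ and $k=n$ then yields the theorem.

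I expect the main obstacle to be bookkeeping rather than any new idea: one must state the strengthened hypothesis so that the odd and even forms of~(B) hand off cleanly between the two lemmas, check that the explicitly verified base cases suffice to ignite both parity chains, and confirm that just before each recursive call the suffix is a genuine permutation of the elements not currently at position $i$, so that the induction hypothesis on the reordered, in-place list applies verbatim. The parity alternation in~(B) is the point most likely to cause slips, since a single off-by-one in $k$ interchanges the two claimed normal forms.
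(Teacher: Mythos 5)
Your proposal is correct and follows essentially the same route as the paper's (much terser) proof: direct verification for $k\le 5$ including the check that the last permutation of five elements has its first two entries swapped, then an induction alternating Lemmas~\ref{lemma1} and~\ref{lemma2} to conclude that each element is inserted at the front exactly once, whence the recursive calls enumerate disjoint, exhaustive classes of permutations. Your explicit (A)/(B) strengthening and the class-partition argument are exactly what the paper leaves implicit in its one-paragraph proof.
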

\begin{proof}
    The algorithm works correctly for $k$ in the range from $0$ to $5$. Additionally, we verify that the last computed permutation of $(e_0, e_1, e_2,e_3, e_4)$ is $(e_1, e_0, e_2,e_3, e_4)$. Using Lemmas~\ref{lemma1} and~\ref{lemma2}, we prove by induction that the algorithm selects each element of the sequence exactly once to insert it at the beginning before computing all permutations of the remaining elements.
\end{proof}

\section{Transition Distance in Permutations}\label{sec3bis}
Looking at the description of the algorithm provided in Section 2~\ref{sec2}, we can observe that no modifications occur before the first recursive call or after the last one. This leads to two resulting properties:
\begin{itemize}
    \item after each modification, the list is yielded without any further changes, as each recursive call waits until the halting condition is met before modifying the data structure for the first time;
    \item when a recursive call returns, the current state of the data structure remains unchanged since the last time the list was yielded.
\end{itemize}

The first implication is that a single modification occurs between two consecutive permutations: one element is selected from its position and inserted into another (in contrast to other similar algorithms). As a side note, in the specific case of an odd length, the same property also holds when {\em cycling} through the permutations of a list, as returning to the initial permutation from the last one merely requires swapping the first two elements.

A second implication is that calculating the distance between two consecutive permutations, according to a given metric, can be achieved through purely local considerations. Indeed, each transition is achieved by applying either a single `left' or `right' move, as defined earlier, with its length depending on the current size of the sublist.

The Kendall tau distance (which counts the number of swaps required for transitionning from a permutation to another one in the general case) is particularly relevant in this context, as it simplifies to the length of each move in our specific case—a property that is straightforward to formalize. The Kendall-tau distance is not only a fairly classical choice for studying our algorithm's behavior (allowing a direct comparison below with two other algorithms), but it also directly answers the question: what is the average length of the jumps in all the pick-and-insert operations involved here?

Let us denote by $D_k$ the cumulative distance in a complete traversal of all permutations of a list of length $k$. It is straightforward to observe that $D_k = k D_{k-1} + a_k$, where $k D_{k-1}$ represents the cumulative distance contributed by the $k$ recursive calls, and $a_k$ is an additional local term accounting for the cumulative distance of all modifications performed between the recursive calls.

Using the triangle that outlines the pattern of recursive calls in Section~\ref{sec2}, we derive the following recurrence relations:
\[
    \left\{
        \begin{array}{ll}
            D_1 = 0\\
            D_k = k D_{k-1}+1+(k-3)(k-1)+1 & \textrm{if $k$ is even}\\
            D_k = k D_{k-1}+(k-1)\times 1 & \textrm{if $k$ is odd}\\
        \end{array}
        \right.
\]

Solving these relations with the aid of a computer algebra system is feasible but yields cumbersome and inelegant formulas. However, employing these formulas to calculate the {\em average} distance between two consecutive permutations for large values of $k$ yields simple expressions, as several terms vanish in the limit:
\[
    \lim_{k\to\infty}\frac{D_k}{k!-1} = 7\cosh(1) - 4\sinh(1) - 5
    \quad
    \approx 1.100759669\dots
\]
which appears to be a remarkably low limit for a recursive algorithm of this kind. By comparison, Steinhaus–Johnson–Trotter algorithm has a transition distance of exactly~1 (which matches its best known feature), while the average distance between two consecutive permutations in lexicographic order approaches a limit of 2\footnote{While a formal proof of this result is beyond the scope of our study, an easy hint toward it can be given as follows. Consider a recursive algorithm iterating over all permutations in lexicographical order: starting from a list of length $n$, a frame of size $k$ is entered $n!/k!$ times; after each of its recursive calls except the last, the sublist of length $k-1$ is reverted at a cost of $(k-1)(k-2)/2$ swaps, plus one further swap to update the pivot. The total number of swaps is thus
\[
\sum_{k=1}^{n}\frac{n!}{k!}(k-1)\left(1+\frac{(k-1)(k-2)}{2}\right),
\]
which gives an average of 2 swaps per transition as $n$ grows.}.

In summary, each permutation is derived from the previous one by selecting an element from the current list and inserting it to the left, with an average movement of just over one position. In essence, about 90\% of the transitions reduce to swapping two consecutive elements.

This result is fairly easy to explain, as the conjunction of two factors: first, roughly half of the pattern followed by the algorithm consists of `left' changes (which amount to a single adjacent swap); second, most of the work in such a recursive algorithm is performed within small frames.

\section{Implementation technical details}\label{sec4}

        We now discuss an implementation of the algorithm in section~\ref{sec2} using singly linked lists. Such a data structure is typically not well-suited for extracting elements repeatedly from both the beginning and the end of a sublist. However, in this very specific case, a natural and elegant solution to this issue will emerge. We highlight that making this extraction an atomic operation is one of the key features of our algorithm.

We also require the data structure to be {\em mutable}, enabling a highly efficient implementation of our algorithm by moving and reusing nodes without creating new ones (eliminating the need for \texttt{cons} operations). This requirement led us to choose Lisp for the proof of concept; however, nothing in the code depends on Lisp-specific features, and it can easily be adapted to other programming languages. This second requirement is, of course, incompatible with adhering to a purely functional style, as described in Okasaki's renowned book~\cite{okasaki}, but it is essential for in-place operations.

Iterating over the consecutive sublists is very natural with such a data structure (by repeatedly taking the \texttt{cdr} of the list), and performing the `left' operation is straightforward, since it can be implemented either by swapping the values of the \texttt{car} and \texttt{cadr} or by directly manipulating the list structure.

We also note that extracting the last element of the sublist is never done before a recursive call, which leads to the following idea:
\begin{itemize}
    \item the descending phase of the recursive call is intended to compute new partial permutations as usual;
    \item the ascending phase of the recursive call is intended to pass back information needed by the next recursive call (returning a handle to the end of the list).
\end{itemize}

The figure below illustrates the current frame near the middle of the recursive call stack. An undefined element is added at the beginning of the list to ensure consistency across all calls. At this level, the sublist operates loosely as an input-restricted deque:

\hspace{8pt}
        \begin{tikzpicture}[main/.style = {draw, circle}]
  \usetikzlibrary{decorations.pathreplacing}
  \usetikzlibrary{arrows.meta}
    \node[main,fill={rgb:black,0;white,8}] (i0) at (0,0) {};
    \draw node[fill,circle,inner sep=0pt,minimum size=2.5pt] at (0,0) {};
    \node[main,fill={rgb:black,1;white,8}] (i1) at (0.75,0) {};
    \node (i1t) at (0.75,0) {\tiny $e_0$};
    \node[main,fill={rgb:black,1;white,8}] (i2) at (1.5,0) {};
    \node (i2t) at (1.5,0) {\tiny $e_1$};
    \node[main,fill={rgb:black,1;white,8}] (i3) at (2.25,0) {};
    \node (i3t) at (2.25,0) {\tiny $e_2$};
    \node[main,fill={rgb:black,1;white,8}] (i4) at (3,0) {};
    \node (i4t) at (3,0) {\tiny $e_3$};
    \node[main,fill={rgb:black,1;white,8}] (i5) at (3.75,0) {};
    \node (i5t) at (3.75,0) {\tiny $e_4$};
    \node[main,fill={rgb:black,1;white,8}] (i6) at (4.5,0) {};
    \node (i6t) at (4.5,0) {\tiny $e_5$};
    \node[main,fill={rgb:black,1;white,8}] (i7) at (5.25,0) {};
    \node (i7t) at (5.25,0) {\tiny $e_6$};
    \node[main,fill={rgb:black,1;white,8}] (i8) at (6,0) {};
    \node (i8t) at (6,0) {\tiny $e_7$};
    \node[main,fill={rgb:black,1;white,8}] (i9) at (6.75,0) {};
    \node (i9t) at (6.75,0) {\tiny $e_8$};
    \node[main,fill={rgb:black,1;white,8}] (i10) at (7.5,0) {};
    \node (i10t) at (7.5,0) {\tiny $e_9$};

        \draw[->, shorten >=.65pt,shorten <=1pt] (i0) to (i1);
        \draw[->, shorten >=.65pt,shorten <=1pt] (i1) to (i2);
        \draw[->, shorten >=.65pt,shorten <=1pt] (i2) to (i3);
        \draw[->, shorten >=.65pt,shorten <=1pt] (i3) to (i4);
        \draw[->, shorten >=.65pt,shorten <=1pt] (i4) to (i5);
        \draw[->, shorten >=.65pt,shorten <=1pt] (i5) to (i6);
        \draw[->, shorten >=.65pt,shorten <=1pt] (i6) to (i7);
        \draw[->, shorten >=.65pt,shorten <=1pt] (i7) to (i8);
        \draw[->, shorten >=.65pt,shorten <=1pt] (i8) to (i9);
        \draw[->, shorten >=.65pt,shorten <=1pt] (i9) to (i10);

        \draw[->, shorten >=.65pt,shorten <=1pt] (i0) to[out=45, in=180] (3,1.5);
        \draw[->, shorten >=.65pt,shorten <=1pt] (i3) to[out=80, in=180] (3,1.1);
            \draw[->, shorten >=.65pt,shorten <=1pt, dashed] (i9) .. controls+(3,-1) and +(3,-.65) .. (4.745,-1.125);

        \draw[->, shorten >=.65pt,shorten <=1pt] (i4) to[out=280, in=180] (3.4,-.5);

        \draw[draw=black, dotted, thick] (2.6,-1.525) rectangle ++(5.25,3.3);
    \node (t1) at (3.7,2.025) {\scriptsize\bfseries\em current frame};
    \node (t2) at (5.26,1.5) {\footnotesize handle to the beginning of the list};
    \node (t3) at (5.15,1.1) {\footnotesize handle to the current head node};
    \node (t6) at (4.65,-.47) {\footnotesize current head node};
    \node (t4) at (3.8,-.95) {\footnotesize handle to the};
    \node (t5) at (3.8,-1.25) {\footnotesize end of the list};
            \node (t5) at (6.775,-1.1) {\footnotesize\em (return value)};

            \draw [decorate,decoration={brace,amplitude=5pt,raise=2ex}] (2.9,0) -- (7.6,0) node[midway,yshift=1.95em]{\footnotesize\em input-restricted deque};
    \end{tikzpicture}

\vspace{-4pt}\noindent Full advantage is taken here of the observation made in Section~\ref{sec3bis}: only a single alteration is made to the list between two consecutive permutations. Consequently, the handle to the last element is used once, becoming immediately obsolete afterward, and no additional handle is needed until the next permutation is generated.

A snippet of code written in Lisp is given in appendix~\ref{lisp} and is available online at~\cite{baruchel}. The function performs a single \texttt{cons} operation (at line 27) to create a handle pointing to the actual beginning of the list, as the initial node may be repositioned during the process. The two auxiliary mutually recursive functions are subsequently invoked with either four or five arguments, depending on the parity of the current length:

\begin{itemize}
    \item the considered list \texttt{p};
    \item the callback function \texttt{func}, to be applied to each generated permutation;
    \item the length of the sublist decreased by~2 (in order to match the size of the loops without changing its parity);
    \item a handle to the previous node (before \texttt{p}) called \texttt{prev}, which always exists, since even the first useful element of the list has a preceding (\texttt{NIL}) element from the initial \texttt{cons} at line~27;
    \item a handle (also called \texttt{handle}) to the beginning of the entire list.
\end{itemize}

Since the modifications in the odd case are simpler than those in the even case, the \texttt{prev} parameter becomes redundant and was eliminated from that section of the code. Specifically, working on the left side of the list was achieved by swapping the contents of the two relevant nodes rather than physically relocating them.

The recursive functions return a handle to the penultimate node of the list, which allows extracting the last node to insert it before the current head node.

We chose $k=2$ for the halting case, as this specific case required slightly different handling compared to other even cases. Keeping the cases $k=0$ and $k=1$ outside the recursive process was straightforward.

The key point to note here is that, aside from the recursive calls, the function performs only elementary assignments using the \texttt{setf} or \texttt{rotatef} macros\footnote{The \texttt{setf} macro merely updates the content of a variable, while the \texttt{rotatef} macro does the same thing on several variables at once.}. Ensuring this was, in fact, the primary objective of the algorithm.

We also draw attention to two facts: the loops are mere `repeat' loops with no loop-control variable being updated, and the values in the list are never tested.

\section{Conclusion}\label{sec13}

In this paper, we introduced a novel algorithm for generating permutations using singly linked lists, emphasizing efficiency and elegance. Unlike conventional methods that rely on arrays or repeated node allocations, our approach leverages the mutability of linked lists to rearrange nodes dynamically without additional memory overhead. By utilizing a recursive structure and distinguishing between the descending and ascending phases of recursive calls, the algorithm efficiently handles both permutation generation and the propagation of structural information about the list.

The proof of correctness relies on a straightforward invariant: each element is inserted at the working position exactly once before all permutations of the remaining elements are generated. The transition distance analysis reveals that the average movement per element is just over one position, with approximately 90\% of transitions being adjacent swaps.

This work illustrates the potential for alternative data structures, such as singly linked lists, to solve traditionally array-dominated problems in a computationally efficient and conceptually elegant way. Future work may explore extending this approach to other combinatorial problems.

\section*{Declarations}

The author did not receive support from any organization for the submitted work.

The author received no financial support for the research, authorship, and/or publication of this article.

The author has no competing interests to declare that are relevant to the content of this article.

\appendix
        \section{Lisp implementation}\label{lisp}

        {\footnotesize
\begin{verbatim}
(defun permutations (li f)
  (labels
    ((aux-odd (p func n handle)
       (loop with s = (1- n)
             initially (aux-even (cdr p) func s p handle)
             repeat n
             do (progn (rotatef (car p) (cadr p))
                       (aux-even (cdr p) func s p handle))
             finally (return (progn
                       (rotatef (car p) (cadr p))
                       (aux-even (cdr p) func s p handle)))))
     (aux-even (p func n prev handle)
       (if (cddr p)
         (loop with s = (1- n)
               initially (progn (aux-odd (cdr p) func s handle)
                                (rotatef (car p) (cadr p)))
               repeat s
               do (let ((l (aux-odd (cddr prev) func s handle)))
                    (rotatef (cddr l) (cdr prev) (cdr l)))
               finally (return (progn
                    (setf p (cdr prev))
                    (aux-odd (cdr p) func s handle)
                    (rotatef (car p) (cadr p))
                    (aux-odd (cdr p) func s handle))))
           (progn (funcall func (cdr handle))
                  (rotatef (car p) (cadr p))
                  (funcall func (cdr handle)) p))))
  (let ((q (cons NIL li)) (n (length li)))
    (if (> n 1)
      (if (oddp n) (aux-odd li f (- n 2) q)
                   (aux-even li f (- n 2) q q))
      (funcall f li)) NIL)))
\end{verbatim}
}



\end{document}